\title{The Lyapunov Neural Network: Adaptive Stability Certification for Safe Learning of Dynamical Systems}
\author{
    Spencer M.~Richards \\
    Department of Mechanical and Process Engineering \\
    ETH Z\"urich \\
    \texttt{spenrich@stanford.edu}
    \And
    Felix Berkenkamp \\
    Department of Computer Science \\
    ETH Z\"urich \\
    \texttt{befelix@inf.ethz.ch}
    \And
    Andreas Krause \\
    Department of Computer Science \\
    ETH Z\"urich \\
    \texttt{krausea@ethz.ch}
}
\pgfplotsset{compat=1.14}
\crefname{equation}{}{} 
\crefname{section}{Sec.}{Sec.}
\newcommand{\rbr}[2][-1]{\del[#1]{#2}}
\algnewcommand\Title[1]{\item[\textbf{Algorithm:}] \textsc{#1}}
\algnewcommand\Input[1]{\State \textbf{Input:} #1}
\algnewcommand\Initialize[1]{\State \textbf{initialize} #1}
\algnewcommand\Break{\State\textbf{break}}
\algnewcommand\Continue{\State\textbf{continue}}
\algrenewcommand\Return{\State\textbf{return}~}
\newcommand{\ie}{i.e.,~}                             
\newcommand{\eg}{e.g.,~}                             
\newcommand{\x}{\times}                             
\newcommand{\defn}{\coloneqq}                      
\DeclareMathOperator{\sign}{sign}                   
\renewcommand{\vec}[1]{\boldsymbol{\mathbf{#1}}}    
\newcommand{\trans}[1]{#1^{\!\top}\!}                 
\newcommand{\R}{\mathbb{R}}                         
\newcommand{\N}{\mathbb{N}}                         
\newtheoremstyle{mystyle1}                          
    {\topsep}                                           
    {\topsep}                                           
    {\itshape}                                      
    {}                                              
    {\bfseries}                                     
    {:}                                             
    { }                                             
    {\thmname{#1}\thmnumber{ #2}\thmnote{ (#3)}}    
\theoremstyle{mystyle1}
    \newtheorem{theorem}{Theorem}
    \newtheorem{corollary}{Corollary}
    \newtheorem{remark}{Remark}
\renewcommand\paragraph[1]{\textbf{#1}~~}
\definecolor{matblue}{rgb}{0, 0.447, 0.741}
\definecolor{matred}{rgb}{0.85, 0.325, 0.098}
\definecolor{matyellow}{rgb}{0.929, 0.694, 0.125}
\definecolor{matpurple}{rgb}{0.494, 0.184, 0.556}
\definecolor{blind0}{RGB}{0,114,178}    
\definecolor{blind1}{RGB}{213,94,0}     
\definecolor{blind2}{RGB}{0,158,115}    
\definecolor{blind3}{RGB}{240,228,66}   
\begin{document}
\maketitle


\begin{abstract}
    Learning algorithms have shown considerable prowess in simulation by allowing robots to adapt to uncertain environments and improve their performance. However, such algorithms are rarely used in practice on safety-critical systems, since the learned policy typically does not yield any safety guarantees. That is, the required exploration may cause physical harm to the robot or its environment. In this paper, we present a method to learn accurate safety certificates for nonlinear, closed-loop dynamical systems. Specifically, we construct a neural network Lyapunov function and a training algorithm that adapts it to the shape of the largest safe region in the state space. The algorithm relies only on knowledge of inputs and outputs of the dynamics, rather than on any specific model structure. We demonstrate our method by learning the safe region of attraction for a simulated inverted pendulum. Furthermore, we discuss how our method can be used in safe learning algorithms together with statistical models of dynamical systems.
\end{abstract}

\keywords{Lyapunov stability, Safe learning, Reinforcement learning}



\section{Introduction}
\label{sec:introduction}

Safety is among the foremost open problems in robotics and artificial intelligence~\cite{Amodei:2016}. Many autonomous systems, such as self-driving cars and robots for palliative care, are safety-critical due to their interaction with human life. At the same time, learning is necessary for these systems to perform well in \textit{a priori} unknown environments. During learning, they must \emph{safely explore} their environment by avoiding dangerous states from which they cannot recover. For example, consider an autonomous robot in an outdoor environment affected by rough terrain and adverse weather conditions. These factors introduce uncertainty about the relationship between the robot's speed and maneuverability. While the robot should learn about its capabilities in such conditions, it must not perform a maneuver at a high speed that would cause it to crash. Conversely, traveling at only slow speeds to avoid accidents is not conducive to learning about the extent of the robot's capabilities.

To ensure \emph{safe learning}, we must verify a \emph{safety certificate} for a state before it is explored. In control theory, a set of states is safe if system trajectories are bounded within it and asymptotically converge to a fixed point under a fixed control policy. Within such a \emph{region of attraction (ROA)}~\cite{Khalil:2002}, the system can collect data during learning and can always recover to a known safe point. In this paper, we leverage Lyapunov stability theory to construct provable, neural network-based safety certificates, and \emph{adapt} them to the size and shape of the largest ROA of a general nonlinear dynamical system.


\paragraph{Related work} Lyapunov functions are convenient tools for stability (\ie safety) certification of dynamical systems~\cite{Khalil:2002} and for ROA estimation~\cite{Vannelli:1985,Hill:1990,Silva:2005}. These functions encode long-term behaviour of state trajectories in a scalar value~\cite{Kalman:1960}, such that a ROA can be encoded as a level set of the Lyapunov function. However, Lyapunov functions for general dynamical systems are difficult to find; computational approaches are surveyed in~\cite{Giesl:2016}. A Lyapunov function can be identified efficiently via a semi-definite program (SDP,~\cite{Boyd:2009}) when the dynamics are polynomial and the Lyapunov function is restricted to be a sum-of-squares (SOS) polynomial~\cite{Parrilo:2000}. Other methods to compute ROAs include maximization of a measure of ROA volume over system trajectories~\cite{Henrion:2014}, and sampling-based approaches that generalize information about stability at discrete points to a continuous region~\cite{Bobiti:2016}.

This paper is particularly concerned with safety certificates for dynamical systems with uncertainties in the form of \emph{model errors}. In robust control~\cite{Zhou:1998}, the formulation of SDPs with SOS Lyapunov functions is used to compute ROA estimates for uncertain linear dynamical systems with the assumption of a worst-case linear perturbation from a known bounded set~\cite{Trofino:2000,Topcu:2010}. Learning-based control methods with a Gaussian process (GP,~\cite{Rasmussen:2006}) model of the system instead consider uncertainty in a Bayesian manner, where model errors are reduced in regions where data has been collected. The methods in~\cite{Berkenkamp:2016,Berkenkamp:2017} estimate a ROA with Lyapunov stability certificates computed on a discretization of the state space, which is used for safe reinforcement learning (RL,~\cite{Sutton:2018}). The Lyapunov function is assumed to be given in~\cite{Berkenkamp:2016}, while~\cite{Berkenkamp:2017} uses the negative value (\ie cost) function from RL with a quadratic reward. Ultimately, this approach is limited by a \emph{shape mismatch} between level sets of the Lyapunov function and the true largest ROA. For example, a quadratic Lyapunov function has ellipsoidal level sets, which cannot characterize a non-ellipsoidal ROA, while the SOS approach is restricted to fixed monomial features. To improve safe exploration for general nonlinear dynamics, we want to \emph{learn} these features to determine a Lyapunov function with suitably shaped level sets.


\paragraph{Contributions} In this paper, we present a novel method for learning accurate safety certificates for general nonlinear dynamical systems. We construct a neural network Lyapunov candidate and, unlike past work in~\cite{Petridis:2006,Noroozi:2008}, we structure our candidate such that it \emph{always} inherently yields a provable safety certificate. Then, we specify a training algorithm that adapts the candidate to the shape of the dynamical system's trajectories via classification of states as safe or unsafe. We do not depend on any specific structure of the dynamics for this. We show how our construction relates to SOS Lyapunov functions, and compare our approach to others on a simulated inverted pendulum benchmark. We also discuss how our method can be used to make safe learning more effective.


\section{Problem Statement and Background}
\label{sec:Background}

We consider a discrete-time, time-invariant, deterministic dynamical system of the form
\begin{equation}
    \vec{x}_{t + 1} = f(\vec{x}_t, \vec{u}_t),
\end{equation}
where $t \in \N$ is the time step index, and $\vec{x}_t \in \mathcal{X} \subset \R^d$ and $\vec{u}_t \in \mathcal{U} \subset \R^p$ are the state and control inputs respectively at time step~$t$. The system is controlled by a feedback policy $\pi \colon \mathcal{X} \to \mathcal{U}$ and the resulting closed-loop dynamical system is given by $\vec{x}_{t + 1} = f_\pi(\vec{x}_t)$ with $f_\pi(\vec{x}) = f(\vec{x}, \pi(\vec{x}))$. We assume this policy is given, but it can, for example, be computed online with RL or optimal control. This policy~$\pi$ is safe to use within a subset~$\mathcal{S}_\pi$ of the state space~$\mathcal{X}$. The set~$\mathcal{S}_\pi$ is a ROA for~$f_\pi$, \ie every system trajectory of $f_\pi$ that begins at some $\vec{x} \in \mathcal{S}_\pi$ also remains in $\mathcal{S}_\pi$ and asymptotically approaches an \emph{equilibrium point} $\vec{x}_\mathrm{O} \in \mathcal{S}_\pi$ where $f_\pi(\vec{x}_\mathrm{O}) = \vec{x}_\mathrm{O}$~\cite{Khalil:2002}. We assume $\vec{x}_\mathrm{O} = \vec{0}$ without loss of generality. Hereafter, we use $\mathcal{S}_\pi$ to denote the true largest ROA in $\mathcal{X}$ under the policy~$\pi$.

A reliable estimate of~$\mathcal{S}_\pi$ is critical to online learning systems, since we need to ensure that a policy is safe to use on the real system before it can be deployed. The goal of this paper is to estimate the largest safe set~$\mathcal{S}_\pi$. We must also ensure safety by never overestimating~$\mathcal{S}_\pi$, \ie we must not identify unsafe states as safe. For this to be feasible, we make a regularity assumption about the closed-loop dynamics; we assume $f_\pi$ is Lipschitz continuous on $\mathcal{X}$ with Lipschitz constant $L_{f_\pi} \in \R_{> 0}$. This is a weak assumption and is even satisfied when a neural network policy is used~\cite{Szegedy:2014}.


\subsection{Safety Certification with Lyapunov Functions}
\label{sec:SafetyCertification}

One way to estimate the safe region~$\mathcal{S}_\pi$ is by using a Lyapunov function. Given a suitable Lyapunov function $v$, a safe region for the closed-loop dynamical system $\vec{x}_{t+1} = f_\pi(\vec{x}_t)$ can be determined.

\begin{theorem}[Lyapunov's stability theorem {\normalfont \cite{Kalman:1960}}]
    \label{thm:Lyapunov}
    Suppose $f_\pi$ is locally Lipschitz continuous and has an equilibrium point at $\vec{x}_\mathrm{O} = \vec{0}$. Let $v : \mathcal{X} \to \R$ be locally Lipschitz continuous on $\mathcal{X}$. If there exists a set $\mathcal{D}_v \subseteq \mathcal{X}$ containing $\vec{0}$ on which $v$ is positive-definite and $\Delta v(\vec{x}) \defn v(f_\pi(\vec{x})) - v(\vec{x}) < 0$, $\forall \vec{x} \in \mathcal{D}_v \setminus \{\vec{0}\}$, then $\vec{x}_\mathrm{O} = \vec{0}$ is an asymptotically stable equilibrium. In this case, $v$ is known as a Lyapunov function for the closed-loop dynamics $f_\pi$, and $\mathcal{D}_v$ is the Lyapunov decrease region for $v$.
\end{theorem}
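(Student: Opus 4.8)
The plan is to prove the two ingredients of asymptotic stability separately: \emph{stability} in the sense of Lyapunov (trajectories starting near $\vec{0}$ remain near $\vec{0}$) and \emph{attractivity} ($\vec{x}_t \to \vec{0}$). The whole argument rests on the observation that $\Delta v(\vec{x}) < 0$ makes $t \mapsto v(\vec{x}_t)$ strictly decreasing along any trajectory that remains in $\mathcal{D}_v \setminus \{\vec{0}\}$, so that sublevel sets of $v$ serve as forward-invariant trapping regions.

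For stability, I would fix a small $\epsilon > 0$ with $\overline{B}_\epsilon \defn \{\vec{x} : \|\vec{x}\| \le \epsilon\} \subseteq \mathcal{D}_v$. Because $v$ is positive-definite and continuous and the sphere $\{\|\vec{x}\| = \epsilon\}$ is compact, $\alpha \defn \min_{\|\vec{x}\| = \epsilon} v(\vec{x}) > 0$. I would then work with the sublevel set $\Omega \defn \{\vec{x} \in \overline{B}_\epsilon : v(\vec{x}) \le \beta\}$ for some $0 < \beta < \alpha$; its component containing $\vec{0}$ lies strictly inside $B_\epsilon$, hence inside $\mathcal{D}_v$, and is compact. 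Continuity of $v$ with $v(\vec{0}) = 0$ then yields $\delta > 0$ with $\{\|\vec{x}\| < \delta\} \subseteq \Omega$. The key step is forward-invariance: if $\vec{x}_t \in \Omega$ then $v(\vec{x}_{t+1}) < v(\vec{x}_t) \le \beta < \alpha$, so $\vec{x}_{t+1}$ cannot lie on the sphere $\{\|\vec{x}\| = \epsilon\}$ and therefore stays in $\Omega$; by induction the whole trajectory does, giving $\|\vec{x}_t\| < \epsilon$ for all $t$.

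For attractivity, I would use that any such trajectory is confined to the compact set $\Omega$, so the decreasing, nonnegative sequence $v(\vec{x}_t)$ converges to some $c \ge 0$; it remains to rule out $c > 0$. If $c > 0$, the trajectory lives in the compact set $K \defn \{\vec{x} \in \Omega : v(\vec{x}) \ge c\}$, which excludes $\vec{0}$. Since $f_\pi$ and $v$ are continuous, $\Delta v$ is continuous, so it attains a maximum $-\gamma < 0$ on $K$; telescoping $v(\vec{x}_{t+1}) - v(\vec{x}_t) \le -\gamma$ gives $v(\vec{x}_t) \le v(\vec{x}_0) - t\gamma \to -\infty$, contradicting $v \ge 0$. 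Hence $c = 0$, and positive-definiteness of $v$ upgrades $v(\vec{x}_t) \to 0$ to $\vec{x}_t \to \vec{0}$.

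I expect the main obstacle to be a genuinely discrete-time point that the continuous-time proof never faces: because a single update $\vec{x}_t \mapsto f_\pi(\vec{x}_t)$ can move the state by a finite jump, one cannot argue (as in continuous time) that the trajectory must first touch the boundary sphere before leaving $\overline{B}_\epsilon$. This is exactly why I would take the trapping region to be a sublevel set $\Omega$ rather than a Euclidean ball: a sublevel set is automatically forward-invariant under a nonincreasing $v$, \emph{provided} it is contained in $\mathcal{D}_v$ so that the decrease condition remains available at every iterate. Verifying that such a compact sublevel set inside $\mathcal{D}_v$ exists, and that iterates never escape it, is the crux; the attractivity compactness argument is then comparatively routine.
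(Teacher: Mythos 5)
The paper never proves this theorem---it is imported verbatim as a classical result from Kalman and Bertram (1960)---so your attempt has to be judged against the standard discrete-time argument, and there it has a genuine gap at exactly the point you yourself flag as the crux. In the stability step you argue: if $\vec{x}_t \in \Omega$ then $v(\vec{x}_{t+1}) < \beta < \alpha$, ``so $\vec{x}_{t+1}$ cannot lie on the sphere $\{\|\vec{x}\| = \epsilon\}$ and therefore stays in $\Omega$.'' The conclusion does not follow: in discrete time, not landing \emph{on} the sphere does not mean staying \emph{inside} the ball; the iterate can jump clean over the annulus to a point outside $\overline{B}_\epsilon$. Your proposed remedy---that a sublevel set is ``automatically forward-invariant under a nonincreasing $v$''---does not repair this, because the set you actually use, $\Omega = \{\vec{x} \in \overline{B}_\epsilon : v(\vec{x}) \le \beta\}$, is the intersection of a sublevel set with a ball, not a full sublevel set; and the full sublevel set $\{\vec{x} \in \mathcal{X} : v(\vec{x}) \le \beta\}$ need not be contained in $\mathcal{D}_v$ for \emph{any} $\beta > 0$, since $v$ is only assumed positive-definite on $\mathcal{D}_v$ and may be arbitrarily small, or even negative, far away. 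Concretely, nothing you have assumed prevents $f_\pi$ from mapping some $\vec{x} \in \Omega \setminus \{\vec{0}\}$ to a remote point $\vec{y} \notin \overline{B}_\epsilon$ with $v(\vec{y}) \le \beta$: this is consistent with $\Delta v(\vec{x}) < 0$ and with your ``not on the sphere'' observation, yet the trajectory has escaped $\mathcal{D}_v$ and all further control is lost.

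A tell-tale symptom is that your stability argument never uses the hypothesis that $f_\pi$ is locally Lipschitz (hence continuous), yet the theorem is false without it: take $v$ positive-definite near $\vec{0}$ but tiny outside $\mathcal{D}_v$, and a discontinuous $f_\pi$ sending every $\vec{x} \neq \vec{0}$ to a remote point of small $v$-value; all the Lyapunov conditions hold on $\mathcal{D}_v$, but the origin is unstable. The standard repair is one extra step that injects exactly this hypothesis: by continuity of $f_\pi$ at $\vec{0}$ and $f_\pi(\vec{0}) = \vec{0}$, choose $\eta \in (0, \epsilon)$ with $f_\pi(B_\eta) \subseteq B_\epsilon$, and then shrink $\beta$ below $\min_{\eta \le \|\vec{x}\| \le \epsilon} v(\vec{x})$, which forces $\Omega \subseteq B_\eta$. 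Then for $\vec{x}_t \in \Omega$ you know \emph{both} that $v(\vec{x}_{t+1}) \le \beta$ \emph{and} that $\vec{x}_{t+1} \in B_\epsilon$, and only the two facts together place $\vec{x}_{t+1}$ back in $\Omega$. Once forward invariance is established this way, the rest of your proof is sound: the monotone convergence of $v(\vec{x}_t)$ to some $c \ge 0$, the compactness of $K = \{\vec{x} \in \Omega : v(\vec{x}) \ge c\}$, the uniform decrease bound $-\gamma < 0$ obtained from continuity of $\Delta v$ on $K$, the telescoping contradiction ruling out $c > 0$, and the upgrade from $v(\vec{x}_t) \to 0$ to $\vec{x}_t \to \vec{0}$ via positive-definiteness are all correct and standard.
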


\cref{thm:Lyapunov} states that a Lyapunov function~$v$ characterizes a ``basin'' of safe states where trajectories of~$f_\pi$ ``fall'' towards the origin $\vec{x}_\mathrm{O} = \vec{0}$. If we can find a positive-definite $v$ such that the dynamics always map downwards in the value of $v(\vec{x})$, then trajectories eventually reach $v(\vec{x}) = \vec{0}$, thus $\vec{x} = \vec{0}$. To find a ROA, rather than checking if $v$ decreases along entire trajectories, it is sufficient to verify the \emph{one-step decrease condition} $\Delta v(\vec{x}) < 0$ for every state $\vec{x}$ in a level set of $v$.

\begin{corollary}[Safe level sets {\normalfont \cite{Kalman:1960}}]
    \label{thm:LevelSets}
    Every level set $\mathcal{V}(c) \defn \cbr{\vec{x} \mid v(\vec{x}) \leq c}, c \in \R_{> 0}$ contained within the decrease region $\mathcal{D}_v$ is invariant under $f_\pi$. That is, $f_\pi(\vec{x}) \in \mathcal{V}(c), \forall \vec{x} \in \mathcal{V}(c)$. Furthermore, $\lim_{t \to \infty} \vec{x}_t = \vec{0}$ for every $\vec{x}_t$ in these level sets, so each one is a ROA for $f_\pi$ and $\vec{x}_\mathrm{O} = \vec{0}$.
\end{corollary}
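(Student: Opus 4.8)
The plan is to split the claim into its two assertions — invariance and convergence — proving invariance first (it follows at once from the one-step decrease condition) and then bootstrapping from invariance to convergence via a compactness argument. Throughout I would use the properties of $v$ guaranteed by \cref{thm:Lyapunov}: positive-definiteness on $\mathcal{D}_v$ (so $v \geq 0$ there with equality only at $\vec{0}$) and the strict decrease $\Delta v(\vec{x}) < 0$ for all $\vec{x} \in \mathcal{D}_v \setminus \{\vec{0}\}$.

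First I would establish invariance. Fix $c \in \R_{>0}$ with $\mathcal{V}(c) \subseteq \mathcal{D}_v$ and take any $\vec{x} \in \mathcal{V}(c)$, so $v(\vec{x}) \leq c$. If $\vec{x} = \vec{0}$ then $f_\pi(\vec{0}) = \vec{0}$ since $\vec{0}$ is an equilibrium, so $f_\pi(\vec{x}) \in \mathcal{V}(c)$ trivially. If $\vec{x} \neq \vec{0}$, then $\vec{x} \in \mathcal{D}_v \setminus \{\vec{0}\}$, and the decrease condition gives $v(f_\pi(\vec{x})) < v(\vec{x}) \leq c$, so again $f_\pi(\vec{x}) \in \mathcal{V}(c)$. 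Hence $\mathcal{V}(c)$ is invariant, and by induction every trajectory $\{\vec{x}_t\}$ beginning in $\mathcal{V}(c)$ stays in $\mathcal{V}(c) \subseteq \mathcal{D}_v$ for all $t$.

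Next I would prove convergence. Along any such trajectory the sequence $v(\vec{x}_t)$ is nonincreasing and bounded below by $0$, so it converges to some $v^\star \geq 0$; the key step is to rule out $v^\star > 0$. Here I would invoke compactness of $\mathcal{V}(c)$ — it is closed as the preimage of $(-\infty, c]$ under the continuous $v$, and bounded as a subset of $\mathcal{X}$, hence compact. Supposing $v^\star > 0$, the set $K = \{\vec{x} \in \mathcal{V}(c) : v(\vec{x}) \geq v^\star\}$ is compact and excludes the origin, so the continuous $\Delta v$ attains on $K$ a maximum $-\gamma < 0$ (strictly negative since $\Delta v < 0$ on $\mathcal{D}_v \setminus \{\vec{0}\}$). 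Because $v(\vec{x}_t) \geq v^\star$ for all $t$, every $\vec{x}_t$ lies in $K$, so $v(\vec{x}_{t+1}) - v(\vec{x}_t) \leq -\gamma$; summing yields $v(\vec{x}_t) \leq v(\vec{x}_0) - t\gamma \to -\infty$, contradicting $v \geq 0$. Thus $v^\star = 0$.

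Finally I would deduce $\vec{x}_t \to \vec{0}$ from $v(\vec{x}_t) \to 0$, again using compactness and positive-definiteness: were $\vec{x}_t \not\to \vec{0}$, a subsequence would stay outside some ball about the origin, and on the compact intersection of $\mathcal{V}(c)$ with the complement of that ball $v$ would have a strictly positive minimum, contradicting $v(\vec{x}_t) \to 0$. The main obstacle is this convergence argument rather than invariance: it rests on compactness of the level set to upgrade the pointwise strict inequality $\Delta v < 0$ into a \emph{uniform} decrease $-\gamma$ away from the origin, which is what forces the Lyapunov values, and hence the states, down to zero.
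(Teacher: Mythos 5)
Your proof is correct, but note that the paper itself never proves this corollary: it is imported from Kalman and Bertram~\cite{Kalman:1960}, and the surrounding text offers only the intuition that the image of $\mathcal{V}(c)$ under $f_\pi$ is a smaller sublevel set, with positive-definiteness then forcing trajectories to the origin. Your argument is exactly the rigorous version of that sketch, and it is the standard one: invariance follows in one line from the one-step decrease condition (with the equilibrium case $\vec{x} = \vec{0}$ handled separately, as you do); convergence follows by monotone convergence of $v(\vec{x}_t)$ to some $v^\star \geq 0$, exclusion of $v^\star > 0$ via a uniform decrease rate $-\gamma$ obtained by maximizing $\Delta v$ over a compact set bounded away from the origin, and positive-definiteness to upgrade $v(\vec{x}_t) \to 0$ to $\vec{x}_t \to \vec{0}$. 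Continuity of $\Delta v$, which you need for the extreme value theorem, is indeed available, since $f_\pi$ is Lipschitz and $v$ is locally Lipschitz under the hypotheses of \cref{thm:Lyapunov}.

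The one step you should make explicit rather than assert in passing is compactness of $\mathcal{V}(c)$. You justify it by saying the level set is ``closed \dots and bounded as a subset of $\mathcal{X}$,'' but the paper only posits $\mathcal{X} \subset \R^d$; for $\mathcal{V}(c)$ to be compact you need $\mathcal{X}$ itself to be compact (or at least closed and bounded), since $\cbr{\vec{x} \in \mathcal{X} \mid v(\vec{x}) \leq c}$ is \emph{a priori} only closed in the relative topology of $\mathcal{X}$. This assumption is implicit throughout the paper (certificates are verified on a finite discretization of $\mathcal{X}$ with fill distance $\tau$), so invoking it is legitimate, but it is doing real work in your proof: with unbounded or non-closed sublevel sets the conclusion can genuinely fail, which is precisely why radial unboundedness appears as a hypothesis in global asymptotic stability theorems. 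State it as an explicit assumption and your argument is complete.
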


Intuitively, if $v(\vec{x})$ decreases everywhere in the level set $\mathcal{V}(c_1)$, except at $\vec{x}_{\mathrm{O}} = \vec{0}$ where it is zero, then $\mathcal{V}(c_1)$ is invariant, since the image of $\mathcal{V}(c_1)$ under $f_\pi$ is the smaller level set $\mathcal{V}(c_2)$ with $c_2 < c_1$. If $v$ is also positive-definite, then this ensures trajectories that start in a level set $\mathcal{V}(c)$ contained in the decrease region $\mathcal{D}_v$ remain in $\mathcal{V}(c)$ and converge to $\vec{x}_\mathrm{O} = \vec{0}$. To identify safe level sets, we must check if a given \emph{Lyapunov candidate} $v$ satisfies the conditions of \cref{thm:Lyapunov}. However, the decrease condition $\Delta v(\vec{x}) < 0$ is difficult to verify throughout a continuous subset $\mathcal{D}_v \subseteq \mathcal{X}$. It is sufficient to verify the tightened safety certificate~$\Delta v(\vec{x}) < -L_{\Delta v} \tau$ at a finite set of points that cover~$\mathcal{D}_v$, where~$L_{\Delta v} \in \R_{> 0}$ is the Lipschitz constant of~$\Delta v$ and~$\tau \in \R_{>0}$ is a measure of how densely the points cover~$\mathcal{D}_v$~\cite{Berkenkamp:2017}. We can even couple this with bounds on $f_\pi$ from a statistical model to certify high-probability safe sets with the certificate $\Delta \hat{v}(\vec{x}) < -L_{\Delta v} \tau$, where~$\Delta \hat{v}(\vec{x})$ is an upper confidence bound on~$\Delta v(\vec{x})$. A GP model of $f_\pi$ is used for this purpose in~\cite{Berkenkamp:2017}.


\subsection{Computing SOS Lyapunov Functions}
\label{sec:SOSLyapunovFunctions}

In general, a suitable Lyapunov candidate $v$ is difficult to find. Computational methods often restrict $v$ to a particular function class for tractability. The SOS approach restricts $v(\vec{x})$ to be polynomial, but is limited to polynomial dynamical systems, \ie when $f_\pi(\vec{x})$ is a vector of polynomials in the elements of $\vec{x}$ \cite{Parrilo:2000,Papachristodoulou:2002,Papachristodoulou:2005}. In particular, the SOS approach enforces $v(\vec{x}) = \trans{m(\vec{x})}\vec{Q}m(\vec{x})$, where~$m(\vec{x})$ is a vector of \emph{a priori} fixed monomial features in the elements of $\vec{x}$, and $\vec{Q}$ is an unknown positive-semidefinite matrix. This makes~$v(\vec{x})$ a quadratic function on a monomial \emph{feature space}. A SDP can be efficiently solved to yield a $\vec{Q}$ that \emph{simultaneously} guarantees that $v$ satisfies the assumptions of \cref{thm:Lyapunov} and has the largest possible level set in its decrease region $\mathcal{D}_v$. That is, the positive-definiteness of $v$ and the negative-definiteness of $\Delta v$ in $\mathcal{D}_v$ are enforced as constraints in the SDP. This contrasts the more general approach described in \cref{sec:SafetyCertification}, where a Lyapunov candidate $v$ is given and then the assumptions of \cref{thm:Lyapunov} are verified by checking discrete points.

\begin{figure}[t]
    \captionsetup[subfigure]{justification=centering}
    \centering

\pgfmathsetmacro{\rays}{10}
\pgfmathsetmacro{\radius}{0.5cm}
\pgfmathsetmacro{\clipwidth}{1.58}
\pgfmathsetmacro{\clipheight}{1.60}
\pgfmathsetmacro{\scale}{1.21}

\begin{subfigure}[c]{0.5\textwidth}
    \centering
    \begin{tikzpicture}[scale=\scale]
        \clip (-\clipwidth,-\clipheight) rectangle (\clipwidth,\clipheight);

        \draw[fill] (0,0) circle (0.5pt) node[below] {$\vec{x}_\mathrm{O} = \vec{0}$};

        \node[] at (-0.8cm, 0.4cm) {$\mathcal{S}_\pi$};
        \pgfmathsetseed{8}
        \draw[very thick, smooth cycle, tension=1, samples=8, domain={0:\rays}, shift={(0.04cm, 0.08cm)}]
            plot (\x*360/\rays:\radius+1cm*rnd);

        \node[] at (0.25cm, 0.45cm) {\color{blind0} $\mathcal{V}(c)$};
        \pgfmathsetmacro{\length}{1.13}
        \pgfmathsetmacro{\width}{0.56}
        \pgfmathsetmacro{\rot}{55}
        \draw[thick, blind0, rotate=\rot, fill=blind0, fill opacity=0.3] (0,0) ellipse (\length cm and \width cm);

        \node[] at (0.25, 1.27) {\color{blind2} $\mathcal{D}_v$};
        \pgfmathsetmacro{\curv}{5}
        \pgfmathsetmacro{\ext}{1.1}
        \pgfmathsetmacro{\ang}{20}
        \pgfmathsetmacro{\dist}{15}
        \begin{scope}[rotate=\rot]
            \draw[dashed, very thick, smooth, domain=-\ext:\ext, blind2] plot (\x, \curv*\x*\x + \width cm)
                to[out=\ang,in=-\ang,distance=\dist] (\ext,-\curv*\ext*\ext - \width cm);
            \draw[dashed, very thick, smooth, domain=\ext:-\ext, blind2] plot (\x, -\curv*\x*\x - \width cm)
                to[out=180+\ang,in=180-\ang,distance=\dist] (-\ext, \curv*\ext*\ext + \width cm);
        \end{scope}
    \end{tikzpicture}
    \vspace{-7pt}
    \caption{Shape mismatch with a fixed\\Lyapunov function.\label{fig:ShapeMismatch}}
\end{subfigure}%
\begin{subfigure}[c]{0.5\textwidth}
    \centering
    \begin{tikzpicture}[scale=\scale]
        \clip (-\clipwidth,-\clipheight) rectangle (\clipwidth,\clipheight);

        \draw[fill] (0,0) circle (0.5pt) node[below] {$\vec{x}_\mathrm{O} = \vec{0}$};

        \pgfmathsetseed{8}
        \node[] at (-1.1cm, -1.2cm) {\color{blind1} $y = -1$};
        \node[] at (1.15cm, -1.2cm) {\color{blind1} $y = -1$};
        \node[] at (-0.9cm, 1cm) {\color{blind1} $y = -1$};

        \node[] at (-0.2cm, 0.4cm) {$\mathcal{S}_\pi = {\color{blind0} \mathcal{V}_{\vec{\theta}}(c_\mathcal{S})}$};
        \pgfmathsetseed{8}
        \draw[very thick, smooth cycle, tension=0.8, samples=8, domain={0:\rays}, shift={(0.04cm, 0.08cm)}]
            plot (\x*360/\rays:\radius+1cm*rnd);

        \node[] at (0.25cm, -0.75cm) {\color{blind0} $y = +1$};
        \pgfmathsetseed{8}
        \draw[very thick, smooth cycle, tension=0.8, samples=8, domain={0:\rays}, shift={(0.04cm, 0.08cm)}, fill=blind0, fill opacity=0.3]
            plot (\x*360/\rays:\radius+1cm*rnd);
    \end{tikzpicture}
    \vspace{-7pt}
    \caption{Shape match with a parameterized\\Lyapunov function.\label{fig:SafeClassifier}}
\end{subfigure}%
    \caption{\cref{fig:ShapeMismatch} illustrates a shape mismatch between the largest level set $\mathcal{V}(c)$ (blue ellipsoid) of a quadratic Lyapunov function $v$ contained within the decrease region $\mathcal{D}_v$ (green dashes), and the safe region $\mathcal{S}_\pi$ (black). We cannot certify all of~$\mathcal{S}_\pi$ with $v$, which limits exploration in safe learning. Instead, we train a Lyapunov candidate $v_{\vec{\theta}}$ with parameters $\vec{\theta}$ to match~$\mathcal{S}_\pi$ with a level set $\mathcal{V}_{\vec{\theta}}(c_\mathcal{S})$, as in \cref{fig:SafeClassifier}, via classification of sampled states as ``safe'' with ground-truth label $y = +1$ (\ie $\vec{x} \in \mathcal{S}_\pi$) or ``unsafe'' with $y = -1$ (\ie $\vec{x} \notin \mathcal{S}_\pi$).}
\end{figure}
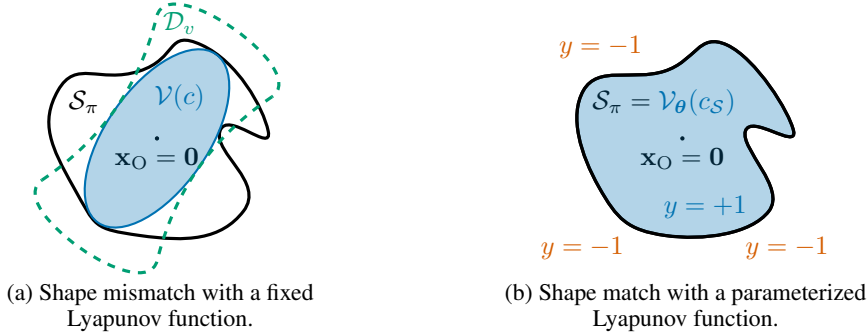

With the SOS approach and a suitable choice of $m(\vec{x})$, $\mathcal{S}_\pi$ can be estimated well with a level set $\mathcal{V}(c)$ of~$v$, since the monomial features allow Lyapunov functions with shapes beyond simple ellipsoids to be found. However, the SOS approach requires polynomial dynamics, and the best choice of $m(\vec{x})$ can be difficult to determine. Without a suitable Lyapunov function, we face the problem of a \emph{shape mismatch} between $\mathcal{V}(c)$ and $\mathcal{S}_\pi$. This is exemplified in~\cref{fig:ShapeMismatch}, where level sets of quadratic $v$ are ellipsoidal while $\mathcal{S}_\pi$ is not, which limits the region of the state space that is certifiable as safe by $v$.


\section{Learning Lyapunov Candidates}
\label{sec:Theory}

In this section, we establish a more flexible class of parameterized Lyapunov candidates that can satisfy the assumptions on $v$ in \cref{thm:Lyapunov} by virtue of their structure and gradient-based parameter training. In particular, we show how a binary classification problem based on whether each state $\vec{x}$ lies within the safe region $\mathcal{S}_\pi$ can be formulated to train the parameterized Lyapunov candidate.

\subsection{Construction of a Neural Network Lyapunov Function}

We take the SOS approach in \cref{sec:SOSLyapunovFunctions} as a starting point to construct a neural network Lyapunov candidate. The SOS Lyapunov candidate $v(\vec{x}) = \trans{m(\vec{x})}\vec{Q}m(\vec{x})$ is a Euclidean inner product on the transformed space $\mathcal{Y} \defn \cbr[1]{\phi(\vec{x}),\ \forall \vec{x} \in \mathcal{X}}$ with $\phi(\vec{x}) \defn \vec{Q}^{1/2}m(\vec{x})$. The ability of the SOS Lyapunov candidate $v$ to certify safe states for~$f_\pi$ depends on the choice of monomials in~$m(\vec{x})$. We interpret these choices as engineered features that define the expressiveness of~$v$ in delineating the decision boundary between safe and unsafe states. Rather than choose such features manually and parameterize~$\phi(\vec{x})$ with~$\vec{Q}$ only, we propose the Lyapunov candidate~$v_{\vec{\theta}}(\vec{x}) = \trans{\phi_{\vec{\theta}}(\vec{x})}\phi_{\vec{\theta}}(\vec{x})$ to \emph{learn} the requisite features, where~$\phi_{\vec{\theta}} : \R^d \to \R^D$ is a feed-forward neural network with parameter vector~$\vec{\theta}$. Feed-forward neural networks are expressive in that they can approximate any continuous function on compact subsets of $\R^d$ with a finite number of parameters \cite{Cybenko:1989,Hornik:2001}. In \cref{sec:LearningSafeSet}, we exploit this property together with gradient-based parameter training to closely match the true ROA~$\mathcal{S}_\pi$ with a level set of the candidate $v_{\vec{\theta}}$ without the need to engineer individual features of $\phi$.

We cannot use an arbitrary feed-forward neural network~$\phi_{\vec{\theta}}$ in our Lyapunov candidate, since the conditions of \cref{thm:Lyapunov} must be satisfied. Otherwise, the resulting candidate~$v_{\vec{\theta}}$ cannot provide any safety information. In general, $\phi_{\vec{\theta}}$ is a sequence of function compositions or layers. Each layer has the form $\vec{y}_\ell(\vec{x}) = \varphi_\ell(\vec{W}_\ell\vec{y}_{\ell-1}(\vec{x}))$, where~$\vec{y}_\ell(\vec{x})$ is the output of layer~$\ell$ for state~$\vec{x} \in \mathcal{X}$, $\varphi_\ell$ is a fixed element-wise activation function, and $\vec{W}_\ell\vec{y}_{\ell-1}(\vec{x})$ is a linear transformation parameterized by $\vec{W}_\ell \in \R^{d_\ell \x d_{\ell-1}}$. To satisfy the assumptions of \cref{thm:Lyapunov}, $v_{\vec{\theta}}$ must be Lipschitz continuous on $\mathcal{X}$ and positive-definite on some subset of $\mathcal{X}$ around $\vec{x}_\mathrm{O} = \vec{0}$. To this end, we restrict $v_{\vec{\theta}}$ to be positive-definite and Lipschitz continuous on $\mathcal{X}$ for \emph{all} values of~$\vec{\theta} \defn \cbr{\vec{W}_\ell}_\ell$ with a suitable choice of structure for $\phi_{\vec{\theta}}$.

\begin{restatable}[Lyapunov neural network]{theorem}{LyapunovNeuralNetwork}
    \label{thm:NeuralNetwork}
    Consider $v_{\vec{\theta}}(\vec{x}) = \trans{\phi_{\vec{\theta}}(\vec{x})}\phi_{\vec{\theta}}(\vec{x})$ as a Lyapunov candidate function, where $\phi_{\vec{\theta}}$ is a feed-forward neural network. Suppose, for each layer~$\ell$ in~$\phi_{\vec{\theta}}$, the activation function $\varphi_\ell$ and weight matrix $\vec{W}_\ell \in \R^{d_\ell \x d_{\ell-1}}$ each have a trivial nullspace. Then~$\phi_{\vec{\theta}}$ has a trivial nullspace, and $v_{\vec{\theta}}$ is positive-definite with $v_{\vec{\theta}}(\vec{0}) = 0$ and $v_{\vec{\theta}}(\vec{x}) > 0,\ \forall \vec{x} \in \mathcal{X} \setminus \cbr{\vec{0}}$. Furthermore, if~$\varphi_\ell$ is Lipschitz continuous for each layer~$\ell$, then~$v_{\vec{\theta}}$ is locally Lipschitz continuous.
\end{restatable}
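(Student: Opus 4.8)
The statement splits into three claims that I would establish in order: that the network $\phi_{\vec{\theta}}$ has a trivial nullspace, that $v_{\vec{\theta}}$ is positive-definite, and that $v_{\vec{\theta}}$ is locally Lipschitz. The crucial preliminary step is to fix what ``trivial nullspace'' means for each ingredient. For the linear weight $\vec{W}_\ell$ it is the usual notion, $\Null(\vec{W}_\ell) = \cbr{\vec{0}}$, which forces $\vec{W}_\ell$ to have full column rank and hence $d_\ell \geq d_{\ell-1}$. For the nonlinear activation $\varphi_\ell$ I would read it as $\varphi_\ell(\vec{z}) = \vec{0} \iff \vec{z} = \vec{0}$; in particular this yields $\varphi_\ell(\vec{0}) = \vec{0}$, which I will need for the forward direction below. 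Once these readings are in place, each claim is short.

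For the nullspace claim I would induct on the layer index, writing each layer as $\vec{y}_\ell(\vec{x}) = \varphi_\ell(\vec{W}_\ell \vec{y}_{\ell-1}(\vec{x}))$ with $\vec{y}_0(\vec{x}) = \vec{x}$, and carrying the hypothesis $\vec{y}_{\ell-1}(\vec{x}) = \vec{0} \iff \vec{x} = \vec{0}$. The base case is trivial since $\vec{y}_0$ is the identity. For the step, $\vec{y}_\ell(\vec{x}) = \vec{0}$ forces $\vec{W}_\ell \vec{y}_{\ell-1}(\vec{x}) = \vec{0}$ by the trivial nullspace of $\varphi_\ell$, then $\vec{y}_{\ell-1}(\vec{x}) = \vec{0}$ by the trivial nullspace of $\vec{W}_\ell$, and finally $\vec{x} = \vec{0}$ by the inductive hypothesis; the converse follows by propagating $\vec{x} = \vec{0}$ forward using $\varphi_\ell(\vec{0}) = \vec{0}$ at each layer. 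Taking $\ell$ to be the output layer gives $\phi_{\vec{\theta}}(\vec{x}) = \vec{0} \iff \vec{x} = \vec{0}$.

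Positive-definiteness is then immediate: since $v_{\vec{\theta}}(\vec{x}) = \trans{\phi_{\vec{\theta}}(\vec{x})}\phi_{\vec{\theta}}(\vec{x}) = \norm{\phi_{\vec{\theta}}(\vec{x})}^2 \geq 0$ with equality exactly when $\phi_{\vec{\theta}}(\vec{x}) = \vec{0}$, the nullspace result gives $v_{\vec{\theta}}(\vec{0}) = 0$ and $v_{\vec{\theta}}(\vec{x}) > 0$ for all $\vec{x} \in \mathcal{X} \setminus \cbr{\vec{0}}$. For local Lipschitz continuity I would use that a composition of Lipschitz maps is Lipschitz: each layer map $\vec{z} \mapsto \varphi_\ell(\vec{W}_\ell \vec{z})$ is Lipschitz because $\vec{z} \mapsto \vec{W}_\ell \vec{z}$ is globally Lipschitz with constant $\norm{\vec{W}_\ell}$ and $\varphi_\ell$ is Lipschitz by hypothesis, so $\phi_{\vec{\theta}}$ is Lipschitz on $\mathcal{X}$. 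The squaring map $\vec{y} \mapsto \norm{\vec{y}}^2$ is only \emph{locally} Lipschitz, so around any point I would restrict to a bounded neighborhood, on which $\phi_{\vec{\theta}}$ has bounded image and $\norm{\cdot}^2$ is Lipschitz; composing gives local Lipschitz continuity of $v_{\vec{\theta}}$.

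I do not expect a serious obstacle here, as all three parts are elementary once set up correctly. The only point requiring care --- the ``main obstacle'' in spirit --- is the very first one: interpreting ``trivial nullspace'' consistently across the nonlinear activations and the linear weights, and recognizing the implicit structural constraint $d_\ell \geq d_{\ell-1}$ that full column rank of each $\vec{W}_\ell$ imposes on the network architecture. After that, the nullspace induction and the standard composition facts for Lipschitz and positive-definite functions close the argument.
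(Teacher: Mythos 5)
Your proposal is correct and follows essentially the same route as the paper's proof: an induction over layers to establish the trivial nullspace of $\phi_{\vec{\theta}}$, positive-definiteness of $v_{\vec{\theta}}$ via the inner-product structure, and Lipschitz continuity of $\phi_{\vec{\theta}}$ by composition of Lipschitz layer maps followed by local Lipschitz continuity of the quadratic form. If anything, you are slightly more careful than the paper in spelling out the interpretation of ``trivial nullspace'' for the nonlinear activations (including $\varphi_\ell(\vec{0}) = \vec{0}$ for the forward implication) and in justifying why squaring only yields \emph{local} Lipschitz continuity, but these are refinements of the same argument, not a different one.
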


We provide a formal proof of \cref{thm:NeuralNetwork} in \cref{app:Proofs} and briefly outline it here. As an inner product, $v_{\vec{\theta}}(\vec{x}) = \trans{\phi_{\vec{\theta}}(\vec{x})}\phi_{\vec{\theta}}(\vec{x})$ is already positive-definite for any neural network output~$\phi_{\vec{\theta}}(\vec{x})$, and thus is \emph{at least} nonnegative for any state~$\vec{x} \in \mathcal{X}$. The step from nonnegativity to positive-definiteness of~$v_{\vec{\theta}}$ on~$\mathcal{X}$ now only depends on how the origin~$\vec{0} \in \mathcal{X}$ is mapped through~$\phi_{\vec{\theta}}$. If~$\phi_{\vec{\theta}}$ maps $\vec{0} \in \mathcal{X}$ \emph{uniquely} to the zero output $\phi_{\vec{\theta}}(\vec{0}) = \vec{0}$, \ie if~$\phi_{\vec{\theta}}$ has a trivial nullspace, then~$v_{\vec{\theta}}$ is positive-definite. For this, it is sufficient that each layer of~$\phi_{\vec{\theta}}$ has a trivial nullspace, \ie that each layer ``passes along'' $\vec{0} \in \mathcal{X}$ to its zero output $\vec{y}_\ell(\vec{0}) = \vec{0}$ until the final output~$\phi_{\vec{\theta}}(\vec{0}) = \vec{0}$.

In \cref{thm:NeuralNetwork}, each layer~$\ell$ has a trivial nullspace as long as its weight matrix~$\vec{W}_\ell$ and activation function~$\varphi_\ell$ have trivial nullspaces. Consequently, this requires that $d_\ell \geq d_{\ell-1}$ for each layer~$\ell$, where~$d_\ell$ is the output dimension of layer~$\ell$. That is, $\vec{W}_\ell$ must not decrease the dimension of its input. To ensure that~$\vec{W}_\ell$ has a trivial nullspace, we structure it as
\begin{equation}
    \label{eqn:LyapunovMatrix}
    \vec{W}_\ell = \begin{bmatrix} \trans{\vec{G}_{\ell 1}}\vec{G}_{\ell 1} + \varepsilon\vec{I}_{d_{\ell-1}} \\ \vec{G}_{\ell 2}\end{bmatrix},
\end{equation}
where~$\vec{G}_{\ell 1} \in \R^{q_\ell \x d_{\ell-1}}$ for some $q_\ell \in \mathbb{N}_{\geq 1}$, ${\vec{G}_{\ell 2} \in \R^{(d_\ell - d_{\ell-1}) \x d_{\ell-1}}}$, $\vec{I}_{d_{\ell-1}} \in \R^{d_{\ell-1} \x d_{\ell-1}}$ is the identity matrix, and $\varepsilon \in \R_{>0}$ is a constant. The top partition $\trans{\vec{G}_{\ell 1}}\vec{G}_{\ell 1} + \varepsilon\vec{I}_{d_{\ell-1}}$ is positive-definite for~$\varepsilon > 0$, thus~$\vec{W}_\ell$ always has full rank and a trivial nullspace. Otherwise, $\vec{W}_\ell$ would have a non-empty nullspace of dimension~$d_{\ell-1} - \min(d_\ell, d_{\ell-1}) = d_{\ell-1} - d_\ell > 0$ by the rank-nullity theorem. With this choice of structure for~$\vec{W}_\ell$, the parameters of the neural network~$\phi_{\vec{\theta}}$ are given by~$\vec{\theta} \defn \cbr{\vec{G}_{\ell 1}, \vec{G}_{\ell 2}}_\ell$. Finally, we choose activation functions that have trivial nullspaces and that are Lipschitz continuous in $\mathcal{X}$, such as $\tanh(\cdot)$ and the leaky ReLU. We can then compute a Lipschitz constant for $\phi_{\vec{\theta}}$ \cite{Szegedy:2014}.

\subsection{Learning a Safe Set via Classification}
\label{sec:LearningSafeSet}

Previously, we constructed a neural network Lyapunov candidate~$v_{\vec{\theta}}$ in~\cref{thm:NeuralNetwork} that satisfies the positive-definiteness and Lipschitz continuity requirements in \cref{thm:Lyapunov}. As a result, we can always use the one-step decrease condition~$\Delta v_{\vec{\theta}}(\vec{x}) \defn v_{\vec{\theta}}(f_\pi(\vec{x})) - v_{\vec{\theta}}(\vec{x}) < 0$ as a provable safety certificate to identify safe level sets that are subsets of the largest safe region~$\mathcal{S}_\pi$. Now, we design a training algorithm to adapt the parameters~$\vec{\theta}$ such that the resulting Lyapunov candidate~$v_{\vec{\theta}}$ satisfies~$\Delta v_{\vec{\theta}}(\vec{x}) < 0$ throughout as large of a decrease region~$\mathcal{D}_{v_{\vec{\theta}}} \subseteq \mathcal{X}$ as possible. This also makes~$v_{\vec{\theta}}$ a valid Lyapunov function for the closed-loop dynamics~$f_\pi$.

For now, we assume the entire safe region $\mathcal{S}_\pi$ is known. We want to use a level set~$\mathcal{V}_{\vec{\theta}}(c)$ of $v_{\vec{\theta}}$ to certify the entire set $\mathcal{S}_\pi$ as safe. According to \cref{thm:Lyapunov}, this requires the Lyapunov decrease condition~$\Delta v_{\vec{{\theta}}}(\vec{x}) < 0$ to be satisfied for each state $\vec{x} \in \mathcal{S}_\pi$. We formally state this problem as
\begin{equation}
    \label{eqn:ROAVol}
    \max_{\vec{\theta}, c} \operatorname{\mathrm{Vol}\!}\rbr{\mathcal{V}_{\vec{\theta}}(c) \cap \mathcal{S}_\pi},~\mathrm{s.t.}~\Delta v_{\vec{\theta}}(\vec{x}) < 0, \forall \vec{x} \in \mathcal{V}_{\vec{\theta}}(c),
\end{equation}
where $\operatorname{\mathrm{Vol}}(\cdot)$ is some measure of set volume. Thus, we want to find the largest level set of~$v_{\vec{\theta}}$ that is contained in the true largest ROA $\mathcal{S}_\pi$; see~\cref{fig:roa_initial}. We fix~$c = c_\mathcal{S}$ with some~$c_\mathcal{S} \in \R_{> 0}$, as it is always possible to rescale~$v_{\vec{\theta}}$ by a constant, and focus on optimizing over $\vec{\theta}$. We can then interpret~\cref{eqn:ROAVol} as a classification problem. Consider~\cref{fig:SafeClassifier}, where we assign the ground-truth label~$y = +1$ whenever a state~$\vec{x}$ is contained in~$\mathcal{S}_\pi$, and~$y = -1$ otherwise. We use~$v_{\vec{\theta}}$ together with~\cref{thm:Lyapunov} to classify states by their membership in the level set $\mathcal{V}(c_\mathcal{S})$. This is described by the decision rule
\begin{equation}
    \label{eqn:decision_rule}
    \hat{y}_{\vec{\theta}}(\vec{x}) = \sign\rbr[1]{c_\mathcal{S} - v_{\vec{\theta}}(\vec{x})}.
\end{equation}
That is, each state within the level set $\mathcal{V}(c_\mathcal{S})$ obtains the label~$y=+1$. However, we must also satisfy the Lyapunov decrease condition imposed by~\cref{thm:Lyapunov}. This can be written as the constraint
\begin{equation}
    \label{eqn:ROALogicB}
    y = +1 \implies \Delta v_{\vec{\theta}}(\vec{x}) < 0,
\end{equation}
which means that we can assign the label~$y=+1$ only if the decrease condition is also satisfied. The decision rule~\cref{eqn:decision_rule} together with the constraint~\cref{eqn:ROALogicB} ensures that the resulting estimated safe set~$\mathcal{V}(c_\mathcal{S})$ satisfies all of the conditions in~\cref{thm:Lyapunov}. We want to select the neural network parameters $\vec{\theta}$ so that this rule can perfectly classify $\vec{x} \in \mathcal{S}_\pi$ as ``safe'' with $\hat{y}_{\vec{\theta}}(\vec{x}) = +1$ (\ie ${c_\mathcal{S} - v_{\vec{\theta}}(\vec{x}) > 0}$) or $\vec{x} \notin \mathcal{S}_\pi$ as ``unsafe'' with $\hat{y}_{\vec{\theta}}(\vec{x}) = -1$ (\ie $c_\mathcal{S} - v_{\vec{\theta}}(\vec{x}) \leq 0$). To this end, the decision boundary $v_{\vec{\theta}}(\vec{x}) = c_\mathcal{S}$ must exactly delineate the boundary of $\mathcal{S}_\pi$. Furthermore, the value of $\vec{\theta}$ must ensure \cref{eqn:ROALogicB} holds, such that $v_{\vec{\theta}}$ satisfies the decrease condition of \cref{thm:Lyapunov} on $\mathcal{S}_\pi$.

\begin{figure}[t]
    \centering

\pgfmathsetmacro{\length}{1.135}
\pgfmathsetmacro{\width}{0.5}
\pgfmathsetmacro{\rot}{60}
\pgfmathsetmacro{\gain}{1.6}
\pgfmathsetmacro{\rays}{10}
\pgfmathsetmacro{\radius}{0.5cm}

\pgfmathsetmacro{\clipwidth}{1.53}
\pgfmathsetmacro{\clipheight}{1.64}
\pgfmathsetmacro{\scale}{1.35}

\begin{subfigure}[c]{0.33\textwidth}
    \centering
    \begin{tikzpicture}[scale=\scale]
        \clip (-\clipwidth,-\clipheight) rectangle (\clipwidth,\clipheight);

        \draw[fill] (0,0) circle (0.5pt) node[below] {$\vec{x}_\mathrm{O} = \vec{0}$};

        \node[] at (0.25cm, 0.45cm) {\color{blind0} $\mathcal{V}_{\vec{\theta}}(c_{k})$};
        \draw[thick, blind0, rotate=\rot, fill=blind0, fill opacity=0.3] (0,0) ellipse (\length cm and \width cm);

        \node[] at (-0.9cm, 0.35cm) {$\mathcal{S}_\pi$};
        \pgfmathsetseed{8}
        \draw[very thick, smooth cycle, tension=1, samples=8, domain={0:\rays}, shift={(0.04cm, 0.08cm)}]
            plot (\x*360/\rays:\radius+1cm*rnd);
    \end{tikzpicture}
    \caption{Current safe level set.\label{fig:roa_initial}}
\end{subfigure}%
\begin{subfigure}[c]{0.33\textwidth}
    \centering
    \begin{tikzpicture}[scale=\scale]
        \clip (-\clipwidth,-\clipheight) rectangle (\clipwidth,\clipheight);

        \draw[fill] (0,0) circle (0.5pt) node[below] {$\vec{x}_\mathrm{O} = \vec{0}$};

        \node[] at (0.25cm, 0.45cm) {\color{blind0} $\mathcal{V}_{\vec{\theta}}(c_{k})$};
        \node[] at (0.6cm, 1.25cm) {\color{blind1} $\mathcal{V}_{\vec{\theta}}(\alpha c_{k})$};

        \draw[thick, blind1, rotate=\rot, fill=blind1, fill opacity=0.3, even odd rule]
            (0,0) ellipse (\length cm and \width cm)
            (0,0) ellipse (\gain * \length cm and \gain * \width cm);
        \path[rotate=\rot, name path=A]
            (0,0) ellipse (\gain * \length cm and \gain * \width cm);
        \draw[thick, blind0, rotate=\rot, fill=blind0, fill opacity=0.3]
            (0,0) ellipse (\length cm and \width cm);

        \path[rotate=\rot, name path=A]
            (0,0) ellipse (\gain * \length cm and \gain * \width cm);
        \path[rotate=\rot, name path=B]
            (0,0) ellipse (\length cm and \width cm);

        \begin{pgfonlayer}{bg}
            \node[] at (-0.95cm, 0.35cm) {$\mathcal{S}_\pi$};
            \pgfmathsetseed{8}
            \path[clip, smooth cycle, tension=1, samples=8, domain={0:\rays}, shift={(0.04cm, 0.08cm)},
                  preaction={draw, very thick}]
                plot (\x*360/\rays:\radius+1cm*rnd);
            \tikzfillbetween[of=A and B, on layer=bg] {blind2, fill, fill opacity=1};
        \end{pgfonlayer}
        \draw [-stealth, very thick]
            (0.4cm, -1.3cm) node[right] {\color{blind2!75!blind1} $\mathcal{G} \cap \mathcal{S}_\pi$}
            -- (0.2cm, -0.75cm);
    \end{tikzpicture}
    \caption{Simulate gap states forward. \label{fig:roa_expansion}}
\end{subfigure}%
\begin{subfigure}[c]{0.33\textwidth}
    \centering
    \begin{tikzpicture}[scale=\scale]
        \clip (-\clipwidth,-\clipheight) rectangle (\clipwidth,\clipheight);

        \draw[fill] (0,0) circle (0.5pt) node[below] {$\vec{x}_\mathrm{O} = \vec{0}$};

        \node[] at (-0.95cm, 0.35cm) {$\mathcal{S}_\pi$};
        \pgfmathsetseed{8}
        \path[clip, smooth cycle, tension=1, samples=8, domain={0:\rays}, shift={(0.04cm, 0.08cm)},
              preaction={draw, very thick}]
            plot (\x*360/\rays:\radius+1cm*rnd);

        \node[] at (0.25cm, 0.45cm) {\color{blind0} $\mathcal{V}_{\vec{\theta}}(c_{k+1})$};
        \path[clip, rotate=\rot, preaction={draw, thick, blind0, fill=blind0, fill opacity=0.3}]
            (0,0) ellipse (\gain * \length cm and \gain * \width cm);
        \pgfmathsetseed{8}
        \draw[blind0, very thick, smooth cycle, tension=1, samples=8, domain={0:\rays}, shift={(0.04cm, 0.08cm)}]
            plot (\x*360/\rays:\radius+1cm*rnd);
    \end{tikzpicture}
    \caption{Re-shape safe level set. \label{fig:roa_final}}
\end{subfigure}
    \caption{Illustration of training the parameterized Lyapunov candidate $v_{\vec{\theta}}$ to expand the safe level set $\mathcal{V}_{\vec{\theta}}(c_k)$ (blue ellipsoid) towards the true largest ROA $\mathcal{S}_\pi$ (black). States in the gap $\mathcal{G}$ between $\mathcal{V}_{\vec{\theta}}(c_k)$ and $\mathcal{V}_{\vec{\theta}}(\alpha c_k)$ (orange ellipsoid) are simulated forward to determine regions (green) towards which we can expand the safe level set. This information is used in \cref{alg:ROAClassifier} to iteratively adapt safe level sets of $v_{\vec{\theta}}$ to the shape of $\mathcal{S}_\pi$.}
    \label{fig:AlgorithmVisualized}
\end{figure}
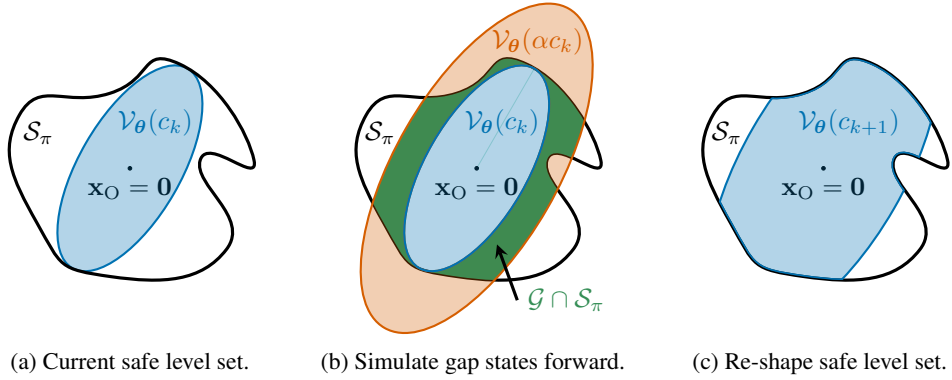

Since we have rewritten the optimization problem in~\cref{eqn:ROAVol} as a classification problem, we can use ideas from the corresponding literature~\cite{Bishop:2006}. In particular, we define a loss function~$\ell(y, \vec{x}; \vec{\theta})$ that penalizes misclassification of the true label~$y$ at a state~$\vec{x}$ under the decision rule~\cref{eqn:decision_rule} associated with~$\vec{\theta}$. Many common choices for the loss function are possible; for simplicity, we use the perceptron loss, which penalizes misclassifications more when they occur far from the decision boundary. We choose not to use the ``maximum margin'' objective of the hinge loss, since it may be unsuitable for us to accurately delineate~$\mathcal{S}_\pi$, where states can lie arbitrarily close to the decision boundary in the continuous state space~$\mathcal{X}$. Since we use the level set~$\mathcal{V}_{\vec{\theta}}(c_\mathcal{S})$ in our classification setting, this corresponds to~$\ell(y, \vec{x}; \vec{\theta}) = \max\rbr[1]{0, -y \cdot \rbr{c_\mathcal{S} - v_{\vec{\theta}}(\vec{x})}}$. Here, $c_{\mathcal{S}} - v_{\vec{\theta}}(\vec{x})$ is the signed distance from the decision boundary $v_{\vec{\theta}}(\vec{x}) = c_\mathcal{S}$, which separates the safe set $\mathcal{S}_\pi$ from the rest of the state space~$\mathcal{X} \setminus \mathcal{S}_\pi$. This \emph{classifier loss} has a magnitude of $\abs{c_\mathcal{S} - v_{\vec{\theta}}(\vec{x})}$ in the case of a misclassification, and zero otherwise. This ensures that decisions far from the decision boundary, such as those near the origin, are considered more important than the more difficult decisions close to the boundary.

Ideally, we would like to minimize this loss throughout the state space with $\min \int_\mathcal{X} l(y, \vec{x}; \theta)\!~\mathrm{d}\vec{x}$ subject to the constraint~\cref{eqn:ROALogicB}. Since this problem is intractable, we use gradient-based optimization together with mini-batches instead, as is typically done in machine learning. To this end, we sample states~$\mathcal{X}_b = \{\vec{x}_i \}_i$ from the state space~$\mathcal{X}$ at random and assign the ground-truth labels~$\{y_i\}_i$ to them. Based on this finite set, the optimization objective can be written as
\begin{equation}
    \min_{\vec{\theta}}  \sum_{\vec{x} \in \mathcal{X}_b} \ell(y, \vec{x}; \vec{\theta}),~\mathrm{s.t.}~y = +1 \implies \Delta v_{\vec{\theta}}(\vec{x}) < 0,
\end{equation}
where the batch~$\mathcal{X}_b$ is re-sampled after every gradient step. We can apply a Lagrangian relaxation
\begin{equation}
    \label{eqn:RelaxedClassifier}
    \min_{\vec{\theta}} \sum_{\vec{x} \in \mathcal{X}_b} \ell(y, \vec{x}; \vec{\theta})
    + \lambda\rbr{\frac{y + 1}{2}} \max \left( 0,\, \Delta v_{\vec{\theta}}(\vec{x}) \right)
\end{equation}
in order to make the problem tractable. Here, $\lambda \in \R_{> 0}$ is a Lagrangian multiplier and the term $\lambda((y+1)/2)\max\rbr[1]{0, \Delta v_{\vec{\theta}}(\vec{x})}$ is the \emph{Lyapunov decrease loss}, which penalizes violations of \cref{eqn:ROALogicB}. The decrease condition $\Delta v_{\vec{\theta}}(\vec{x}) < 0$ only needs to be enforced within the safe region $\mathcal{S}_\pi$, so we do not want to incur a loss if it is violated at a state where $y = -1$. Thus, we use the multiplier $(y + 1)/2$ to map $\cbr{+1, -1}$ to $\cbr{1, 0}$, such that the Lyapunov decrease loss is zeroed-out if $y = -1$.

However, there are two issues when this formulation is compared to the exact problem in~\cref{eqn:ROAVol}. Firstly, the objective~\cref{eqn:RelaxedClassifier} only penalizes violations of the decrease condition~$\Delta v_{\vec{\theta}}(\vec{x}) < 0$, rather than constraining~$\vec{\theta}$ to enforce it. Thus, while~$\Delta v_{\vec{\theta}}(\vec{x}) < 0$ is \emph{always} a provable safety certificate, we must \emph{verify} that it holds over some level set whenever we update~$\vec{\theta}$. Secondly, ground-truth labels of~$\mathcal{S}_\pi$ are not known in practice. To address these issues, we can use any method to check Lyapunov safety certificates over continuous state spaces to certify a level set $\mathcal{V}_{\vec{\theta}}(c)$ as safe, and then use $\mathcal{V}_{\vec{\theta}}(c)$ to estimate labels $y$ from $\mathcal{S}_\pi$. For this work, we check the tightened certificate~$\Delta v_{\vec{\theta}}(\vec{x}) < -L_{\Delta v_{\vec{\theta}}} \tau$ on a discretization of $\mathcal{X}$, as described in \cref{sec:SafetyCertification}. This method exposes the Lipschitz constant~$L_{\Delta v_{\vec{\theta}}}$ of~$\Delta v_{\vec{\theta}}$, which can conveniently be used for regularization in practice~\cite{Szegedy:2014}. Possible alternatives to this safety verification method include the use of an adaptive discretization for better scaling to higher-dimensional state spaces~\cite{Bobiti:2016}, and formal verification methods for neural networks~\cite{Huang:2017,Katz:2017}.

\begin{algorithm}[t]
    \caption{ROA Classifier Training} \label{alg:ROAClassifier}
    \begin{algorithmic}[1]
        \Input{closed-loop dynamics~$f_\pi$; initialized parametric Lyapunov candidate~$v_{\vec{\theta}} : \mathcal{X} \to \R_{\geq 0}$; Lagrange multiplier~$\lambda \in \R_{> 0}$; level set ``expansion'' multiplier~$\alpha \in \R_{> 1}$; forward-simulation horizon~$T \in \mathbb{N}_{\geq 1}$.}
        \State $c_0 \gets \max_{\vec{x} \in \mathcal{X}} v_{\vec{\theta}}(\vec{x}),~\mathrm{s.t.}~\mathcal{V}_{\vec{\theta}}(c_0) \subseteq \mathcal{D}_{v_{\vec{\theta}}}$.
            \Comment{\parbox[t]{0.5\linewidth}{compute the initial safe level set (\eg use a discretization, as described in \cref{sec:SafetyCertification})}}
        \Repeat
            \State Sample a finite batch $\mathcal{X}_b \subset \mathcal{V}_{\vec{\theta}}(\alpha c_k)$.
            \State $\mathcal{S}_b \gets \cbr[1]{\vec{x} \in \mathcal{X}_b \mid f^{(T)}_\pi(\vec{x}) \in \mathcal{V}_{\vec{\theta}}(c_k)}$.
                \Comment{\parbox[t]{0.5\linewidth}{forward-simulate the batch with $f_\pi$ over $T$ steps}}
            \State Update $\vec{\theta}$ with \cref{eqn:RelaxedClassifier} via batch SGD on $\mathcal{X}_b$ and labels $\cbr{y_i}_i$ for points in $\mathcal{S}_b$.
            \State $c_{k+1} \gets \max_{\vec{x} \in \mathcal{X}} v_{\vec{\theta}}(\vec{x}),~\mathrm{s.t.}~\mathcal{V}_{\vec{\theta}}(c_{k+1}) \subseteq \mathcal{D}_{v_{\vec{\theta}}}$.
        \Until{convergence}
    \end{algorithmic}
\end{algorithm}

Since such an estimate of~$\mathcal{S}_\pi$ is limited by the largest safe level set of $v_{\vec{\theta}}$, we propose \cref{alg:ROAClassifier} to iteratively ``grow'' an estimate of~$\mathcal{S}_\pi$. We initialize~$v_{\vec{\theta}}$, then use it to identify the largest safe level set~$\mathcal{V}_{\vec{\theta}}(c_0)$ by verifying the condition~$\Delta v_{\vec{\theta}}(\vec{x}) < 0$. At first, we use~$\mathcal{V}_{\vec{\theta}}(c_0)$ to estimate~$\mathcal{S}_\pi$. At iteration~$k \in \mathbb{N}_{\geq 0}$, we consider the safe level set~$\mathcal{V}_{\vec{\theta}}(c_k)$ and the expanded level set~$\mathcal{V}_{\vec{\theta}}(\alpha c_k)$ for some~$\alpha \in \R_{> 1}$; see~\cref{fig:roa_expansion}. Then, states in the ``gap''~$\mathcal{G} \defn \mathcal{V}_{\vec{\theta}}(\alpha c_k) \setminus \mathcal{V}_{\vec{\theta}}(c_k)$ are forward-simulated with the dynamics~$f_\pi$ for $T \in \mathbb{N}_{\geq 1}$ time steps. States that fall in $\mathcal{V}_{\vec{\theta}}(c_k)$ before or after forward-simulation form a new estimate of~$\mathcal{S}_\pi$, since trajectories become ``trapped'' in $\mathcal{V}_{\vec{\theta}}(c_k)$ and converge to the origin. We use this estimate of~$\mathcal{S}_\pi$ to identify labels $y$ for classification, then apply SGD with the objective \cref{eqn:RelaxedClassifier} to update $\vec{\theta}$ and encourage $\mathcal{V}_{\vec{\theta}}(c_k)$ to grow. Finally, we certify the new largest safe level set~$\mathcal{V}_{\vec{\theta}}(c_{k+1})$. These steps are repeated until a choice of stopping criterion is satisfied.

In general, \cref{alg:ROAClassifier} does not guarantee convergence of the safe level set $\mathcal{V}_{\vec{\theta}}(c_k)$ to $\mathcal{S}_\pi$, nor that~$\mathcal{V}_{\vec{\theta}}(c_k)$ monotonically grows in volume. Furthermore, it is not guaranteed that the iterated safe level~$c_k \in \R_{> 0}$ approaches the safe level~$c_{\mathcal{S}}$ that is prescribed to delineate~$\mathcal{S}_\pi$. This is typical of gradient-based parameter training, since the parameters $\vec{\theta}$ can become ``stuck'' in local optima. However, since the Lyapunov candidate $v_{\vec{\theta}}$ is guaranteed to satisfy the positive-definiteness and Lipschitz continuity conditions of \cref{thm:Lyapunov} by its construction in \cref{thm:NeuralNetwork}, $\Delta v_{\vec{\theta}}(\vec{x}) < 0$ is \emph{always} a provable safety certificate for identifying safe level sets. Thus, we can \emph{always} use~$v_{\vec{\theta}}$ to identify at least a subset of~$\mathcal{S}_\pi$, without ever identifying unsafe states as safe.


\section{Experiments and Discussion}
\label{sec:Results}

In the previous section, we developed \cref{alg:ROAClassifier} to train the parameters $\vec{\theta}$ of a neural network Lyapunov candidate $v_{\vec{\theta}}$ constructed according to \cref{thm:NeuralNetwork}. This construction ensures the positive-definiteness and Lipschitz continuity assumptions on $v_{\vec{\theta}}$ in \cref{thm:Lyapunov} are satisfied. \cref{alg:ROAClassifier} encourages $v_{\vec{\theta}}$ to satisfy the decrease condition and match the true largest ROA $\mathcal{S}_\pi$ for the closed-loop dynamics~$f_\pi$ with a level set $\mathcal{V}_{\vec{\theta}}(c_{\mathcal{S}})$. In this section, we present details for the implementation of \cref{alg:ROAClassifier} to learn the largest safe region of a simulated inverted pendulum system, and experimental results in a comparison to other methods of computing Lyapunov functions.


\paragraph{Inverted Pendulum Benchmark} The inverted pendulum is governed by the differential equation $m\ell^2\ddot{\theta} = mg\ell\sin\theta - \beta \dot{\theta} + u$ with state $\vec{x} \defn (\theta,\dot{\theta})$, where $\theta$ is the angle from the upright equilibrium~$\vec{x}_{\mathrm{O}} = \vec{0}$, $u$ is the input torque, $m$ is the pendulum mass, $g$ is the gravitational acceleration, $\ell$ is the pole length, and $\beta$ is the friction coefficient. We discretize the dynamics with a time step of~$\Delta t = \unit[0.01]{s}$ and enforce a saturation constraint $u \in [-\bar{u},\bar{u}]$, such that the pendulum falls over past a certain angle and cannot recover. For a linear policy~$u = \pi(\vec{x}) = \vec{K}\vec{x}$, this yields the safe region $\mathcal{S}_\pi$ in \cref{fig:Pendulum} around the upright equilibrium for the closed-loop dynamics $f_\pi$. In particular, we fix $\vec{K}$ to the linear quadratic regulator (LQR) solution for the discretized, linearized, unconstrained form of the dynamics~\cite{Lewis:2012}. Outside of $\mathcal{S}_\pi$, the pendulum falls down without the ability to recover and the system trajectories diverge away from~$\vec{x}_{\mathrm{O}} = \vec{0}$.

\begin{figure}[t]
    \centering
    \begin{subfigure}[b]{0.5\textwidth}
        \centering


\newenvironment{customlegend}[1][]{%
    \begingroup
    \csname pgfplots@init@cleared@structures\endcsname
    \pgfplotsset{#1}%
}{%
    \csname pgfplots@createlegend\endcsname
    \endgroup
}%

\def\addlegendimage{\csname pgfplots@addlegendimage\endcsname}

\definecolor{blind0}{RGB}{0,158,115}    
\definecolor{blind1}{RGB}{230,159,0}    
\definecolor{blind2}{RGB}{0,114,178}    
\definecolor{blind3}{RGB}{240,228,66}   

\begin{tikzpicture}
    \begin{axis}[width=5cm,
                 height=4.5cm,
                 scale only axis,
                 enlargelimits=false,
                 axis on top,
                 xlabel={angle [deg]},
                 ylabel={angular velocity [deg/s]}
    ]
        \addplot graphics[xmin=-180, xmax=180, ymin=-360, ymax=360] {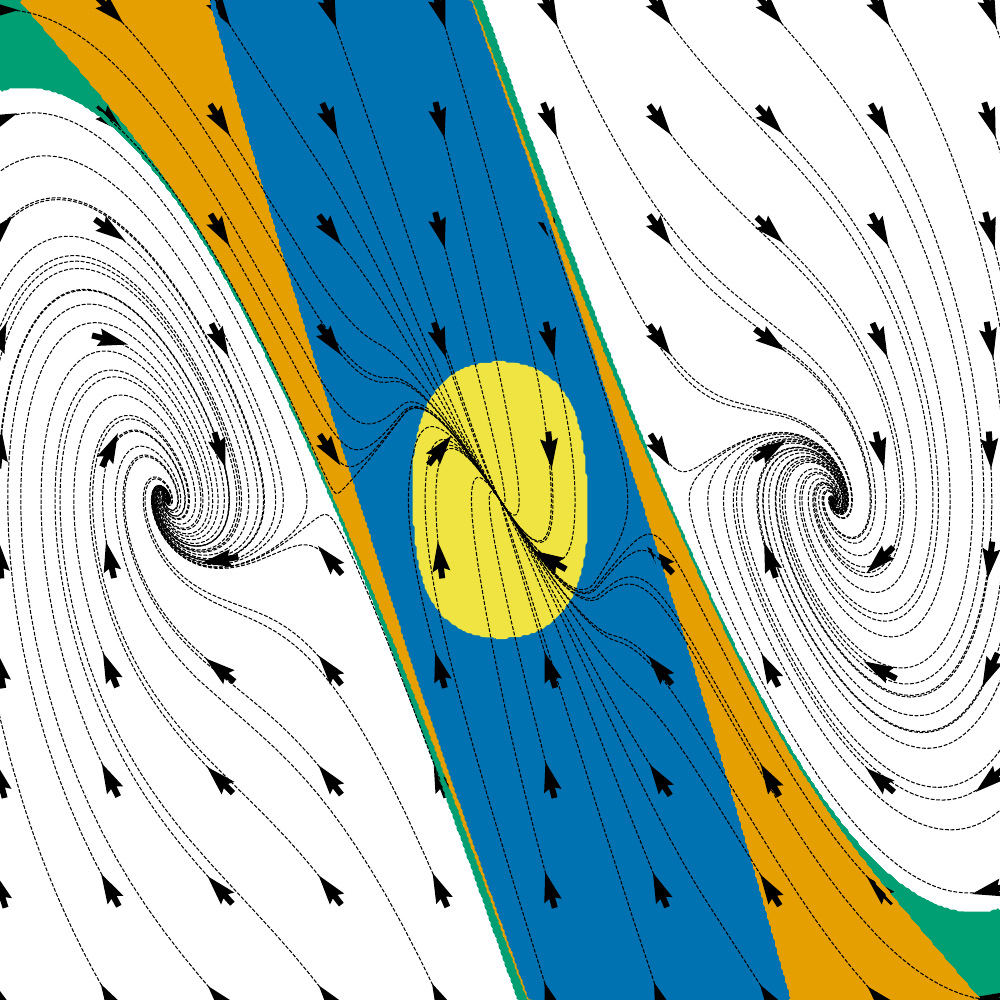};
    \end{axis}

    \begin{customlegend}[legend entries={$\mathcal{S}_\pi$, NN, LQR, SOS},
                         legend style={at={(4.95, 3.55)}, anchor=east, font=\small}]
        \addlegendimage{blind0!70!black, fill=blind0, area legend}
        \addlegendimage{blind1!70!black, fill=blind1, area legend}
        \addlegendimage{blind2!70!black, fill=blind2, area legend}
        \addlegendimage{blind3!70!black, fill=blind3, area legend}
    \end{customlegend}
\end{tikzpicture}%
        \caption{Safe Lyapunov candidate level sets.}
        \label{fig:PendulumA}
    \end{subfigure}%
    \begin{subfigure}[b]{0.5\textwidth}
        \centering

\definecolor{blind0}{RGB}{0,158,115}    
\definecolor{blind1}{RGB}{230,159,0}    
\definecolor{blind2}{RGB}{0,114,178}    
\definecolor{blind3}{RGB}{240,228,66}   

\begin{tikzpicture}
    \begin{groupplot}[group style={group size=1 by 2,
                                   xlabels at=edge bottom,
                                   xticklabels at=edge bottom,
                                   vertical sep=15pt},
                      width=5cm,
                      height=1.8cm,
                      scale only axis,
                      xmin=-0.5,
                      xmax=18.5,
                      xlabel={safe level set update iteration $k$}
    ]
        \nextgroupplot[ymin=0, ymax=1.5, ylabel={safe level $c_k$}]
        \addplot[blind1, mark=*] table[x=iter, y=c_max, col sep=comma] {figures/pendulum_data.csv};

        \nextgroupplot[ymin=0, ymax=1, ylabel={fraction of $\mathcal{S}_\pi$}]
        \addplot[blind1, mark=*] table[x=iter, y=roa_frac, col sep=comma] {figures/pendulum_data.csv};
    \end{groupplot}
\end{tikzpicture}
        \caption{Training behaviour of neural network candidate.}
        \label{fig:PendulumB}
    \end{subfigure}
    \caption{Results for training the neural network (NN) Lyapunov candidate $v_{\vec{\theta}}$ for an inverted pendulum. In \cref{fig:PendulumA}, system trajectories (black) converge to the origin only within the largest safe region $\mathcal{S}_\pi$ (green). The NN candidate (orange) characterizes $\mathcal{S}_\pi$ with a level set better than both the LQR (blue ellipsoid) and SOS (yellow) candidates, as it adapts to the shape of $\mathcal{S}_\pi$. In \cref{fig:PendulumB}, the safe level $c_k$ of $v_{\vec{\theta}}$ converges non-monotonically towards the fixed boundary $c_{\mathcal{S}} = 1$, and the safe level set $\mathcal{V}_{\vec{\theta}}(c_k)$ grows to cover most of $\mathcal{S}_\pi$. However, as discussed at the end of \cref{sec:Theory}, convergence of $\mathcal{V}_{\vec{\theta}}(c_k)$ to $\mathcal{S}_\pi$ is not guaranteed in general by \cref{alg:ROAClassifier}.}
    \label{fig:Pendulum}
\end{figure}


\paragraph{Practical Considerations} To train the parameters of the Lyapunov candidate $v_{\vec{\theta}}$ to adapt to the shape of $\mathcal{S}_\pi$, we use \cref{alg:ROAClassifier} with SGD. To certify the safety of continuous level sets of~$v_{\vec{\theta}}$ whenever~$\vec{\theta}$ is updated, we check the stricter decrease condition~$\Delta v_{\vec{\theta}}(\vec{x}) < -L_{\Delta v_{\vec{\theta}}}\tau$ at a discrete set of points that cover $\mathcal{X}$ in increasing order of the value of $v_{\vec{\theta}}(\vec{x})$, as in \cite{Berkenkamp:2017}. \cref{alg:ROAClassifier} does not guarantee that the safe level set estimate $\mathcal{V}_{\vec{\theta}}(c_k)$ grows monotonically in volume towards $\mathcal{S}_\pi$ with each iteration $k$. In fact, the estimate $\mathcal{V}_{\vec{\theta}}(c_k)$ may shrink if $v_{\vec{\theta}}$ initially succeeds and then fails to satisfy the decrease condition $\Delta v_{\vec{\theta}}(\vec{x}) < 0$ in some regions of the state space. This tends to occur near the origin, where~${v_{\vec{\theta}}(\vec{0}) = \Delta v_{\vec{\theta}}(\vec{0}) = 0}$ and the ``basin of attraction'' characterized by $v_{\vec{\theta}}$ ``flattens''. To alleviate this, we use a large Lagrange multiplier $\lambda = 1000$ in the SGD objective \cref{eqn:RelaxedClassifier} to strongly ``push'' $\vec{\theta}$ towards values that ensure $v_{\vec{\theta}}$ continues to satisfy the decrease condition. In addition, we normalize the Lyapunov decrease loss $\lambda((y+1)/2)\max\rbr[1]{0, \Delta v_{\vec{\theta}}(\vec{x})}$ in \cref{eqn:RelaxedClassifier} by $v_{\vec{\theta}}(\vec{x})$. This more heavily weighs sampled states near the origin, \ie where $v_{\vec{\theta}}(\vec{x})$ is small.


\paragraph{Results} We implement \cref{alg:ROAClassifier} on the inverted pendulum benchmark with the Python code available at \url{https://github.com/befelix/safe_learning}, which is based on TensorFlow~\cite{Abadi:2016}. For the neural network Lyapunov candidate $v_{\vec{\theta}}$, we use three layers of 64 $\tanh(\cdot)$ activation units each. We prescribe $\mathcal{V}_{\vec{\theta}}(c_{\mathcal{S}})$ with $c_{\mathcal{S}} = 1$ as the level set that delineates the safe region~$\mathcal{S}_\pi$. \cref{fig:Pendulum} shows the results of training $v_{\vec{\theta}}$ with \cref{alg:ROAClassifier}, and the largest safe level set $\mathcal{V}_{\vec{\theta}}(c_{18})$ with $10$ SGD iterations per update. \cref{fig:PendulumA} visualizes how this level set has ``moulded'' to the shape of~$\mathcal{S}_\pi$. \cref{fig:PendulumB} shows how the safe level~$c_k$ converges towards the prescribed level~$c_{\mathcal{S}} = 1$ that delineates~$\mathcal{S}_\pi$, and how the fraction of~$\mathcal{S}_\pi$ covered by~$\mathcal{V}_{\vec{\theta}}(c_k)$ approaches~$1$. The true largest ROA~$\mathcal{S}_\pi$ is estimated by forward-simulating all of the states in a state space discretization, and set volume is estimated by counting discrete states. \cref{fig:PendulumA} also shows the largest safe sets for a LQR Lyapunov candidate and a SOS Lyapunov candidate. The LQR candidate $v_{\mathrm{LQR}}(\vec{x}) = \trans{\vec{x}}\vec{P}\vec{x}$ is computed in closed-form for the same discretized, linearized, unconstrained form of the dynamics used to determine the LQR policy~$\pi(\vec{x}) = \vec{K}\vec{x}$ \cite{Lewis:2012}. The SOS Lyapunov candidate~$v_{\mathrm{SOS}}(\vec{x}) = \trans{m(\vec{x})}\vec{Q}m(\vec{x})$ uses up to third-order monomials in $\vec{x}$, thus it is a sixth-order polynomial. It is computed with the toolbox SOSTOOLS \cite{Prajna:2002} and the SDP solver SeDuMi \cite{Sturm:1999} in MATLAB for the unconstrained nonlinear dynamics with a Taylor polynomial expansion of $\sin\theta$. While the SOS approach is a powerful specialized method for polynomial dynamical systems, it cannot account for the non-differentiable nonlinearity introduced by the input saturation, which drastically alters the closed-loop dynamics. As a result, while $v_{\mathrm{SOS}}$ is optimized for the system without saturation, it is ill-suited to the true closed-loop dynamics and yields a small safe level set. Overall, our neural network Lyapunov candidate $v_{\vec{\theta}}$ performs the best at certification of as much of $\mathcal{S}_\pi$ as possible, since it only relies on inputs and outputs of $f_\pi$, and adapts to the shape of~$\mathcal{S}_\pi$.


\paragraph{Comments on Safe Learning} \cref{fig:PendulumA} demonstrates that a neural network Lyapunov candidate~$v_{\vec{\theta}}$ can certify more of the true largest safe region $\mathcal{S}_\pi$ than other common Lyapunov candidates. This has important implications for safe exploration during learning for dynamical systems; with more safe states available to visit, an agent can better learn about itself and its environment under a wider range of operating conditions. For example, our method is applicable in the safe reinforcement learning framework of~\cite{Berkenkamp:2017}. This past work provides safe exploration guarantees for a GP model of the dynamics~$f_\pi$ with confidence bounds on the Lyapunov stability certificate, but these guarantees are limited by the choice of Lyapunov function. As our results have shown, certain Lyapunov candidates may poorly characterize the shape of the true largest safe region~$\mathcal{S}_\pi$. Since our neural network Lyapunov candidate can adapt to the shape of~$\mathcal{S}_\pi$ during learning by using, for example, the mean estimate of $f_\pi$ from the GP model, we could enlarge the estimated safe region more quickly as data is collected. Our method is also applicable to exploration algorithms within safe motion planning that depends on knowledge of a safe region, such as in~\cite{Koller:2018}. Overall, our method strongly warrants consideration for use in safe learning methods that leverage statistical models of dynamical systems.


\section{Conclusion}
\label{sec:Conclusion}

We have demonstrated a novel method for learning safety certificates for general nonlinear dynamical systems. Specifically, we developed a flexible class of parameterized Lyapunov candidate functions and a training algorithm to adapt them to the shape of the largest safe region for a closed-loop dynamical system. We believe that our method is appealing due to its applicability to a wide range of dynamical systems in theory and practice. Furthermore, it can play an important role in improving safe exploration during learning for real autonomous systems in uncertain environments.



\clearpage


\acknowledgments{This research was supported in part by SNSF grant {200020\_159557}, the Vector Institute, and a fellowship by the Open Philanthropy Project.}


\bibliography{references}  

\clearpage

\appendix

\section{Proofs}
\label{app:Proofs}

\LyapunovNeuralNetwork*

\begin{proof}
    We begin by showing that $\phi_{\vec{\theta}}$ has a trivial nullspace in $\mathcal{X}$ by induction, and then use this to prove that $v_{\vec{\theta}}$ is positive-definite on $\mathcal{X}$. Recall that a feed-forward neural network is a successive composition of its layer transformations, such that the output $\vec{y}_\ell(\vec{x})$ of layer~$\ell$ for the state $\vec{x} \in \mathcal{X}$ is the input to layer~$\ell+1$. Consider~$\ell = 0$ with the input $\vec{y}_0(\vec{x}) \defn \vec{x}$, and the first layer output $\vec{y}_1(\vec{x}) = \varphi_1 (\vec{W}_1 \vec{y}_0(\vec{x}))$. Clearly $\vec{y}_0$ has a trivial nullspace in $\mathcal{X}$, since it is just the identity function. Since $\vec{W}_1$, $\varphi_1$, and $\vec{y}_0$ each have a trivial nullspace in their respective input spaces, the sequence of logical statements
    \begin{equation}
        \vec{x} = \vec{0} \iff \vec{y}_0(\vec{x}) = \vec{0} \iff \vec{W}_1 \vec{y}_0(\vec{x}) = \vec{0} \iff \varphi_1 (\vec{W}_1 \vec{y}_0(\vec{x})) = \vec{0}
    \end{equation}
    holds. Thus, $\vec{x} = \vec{0} \iff \varphi_1 (\vec{W}_1 \vec{y}_0(\vec{x})) = \vec{0}$ holds, and $\vec{y}_1$ has a trivial nullspace in $\mathcal{X}$. If we now assume $\vec{y}_\ell$ has a trivial nullspace in $\mathcal{X}$, it is clear that $\vec{y}_{\ell+1}$ has a trivial nullspace in $\mathcal{X}$, since
    \begin{equation}
        \vec{x} = \vec{0} \iff \vec{y}_\ell(\vec{x}) = \vec{0} \iff \vec{W}_{\ell+1} \vec{y}_\ell(\vec{x}) = \vec{0} \iff \varphi_{\ell+1} (\vec{W}_{\ell+1} \vec{y}_\ell(\vec{x})) = \vec{0}
    \end{equation}
    holds in a similar fashion. As a result, $\vec{y}_\ell$ has a trivial nullspace for each layer~$\ell$ by induction. Since $\phi_{\vec{\theta}}$ is a composition of a finite number of layers, $\phi_{\vec{\theta}} = \vec{y}_L$ for some $L \in \mathbb{N}_{\geq 0}$, thus $\phi_{\vec{\theta}}$ has a trivial nullspace in $\mathcal{X}$.

    We now use this property of $\phi_{\vec{\theta}}$ to prove that the Lyapunov candidate $v_{\vec{\theta}}(\vec{x}) = \trans{\phi_{\vec{\theta}}(\vec{x})}\phi_{\vec{\theta}}(\vec{x})$ is positive-definite on $\mathcal{X}$. As an inner product, $\trans{\phi_{\vec{\theta}}(\vec{x})}\phi_{\vec{\theta}}(\vec{x})$ is positive-definite on the transformed space $\mathcal{Y} \defn \cbr{\phi_{\vec{\theta}}(\vec{x}),\ \forall \vec{x} \in \mathcal{X}}$. Thus, $v_{\vec{\theta}}(\vec{x}) = 0 \iff \phi_{\vec{\theta}}(\vec{x}) = \vec{0}$ and $v_{\vec{\theta}}(\vec{x}) > 0$ otherwise. Since we have already proven $\phi_{\vec{\theta}}(\vec{x}) = \vec{0} \iff \vec{x} = \vec{0}$, combining these statements shows that $v_{\vec{\theta}}(\vec{x}) = 0 \iff \vec{x} = \vec{0}$ and $v_{\vec{\theta}}(\vec{x}) > 0$ otherwise. As a result, $v_{\vec{\theta}}(\vec{x})$ is positive-definite on $\mathcal{X}$.

    Finally, we need to show that if every activation function $\varphi_\ell$ is Lipschitz continuous, then~$v_{\vec{\theta}}$ is locally Lipschitz continuous. If the neural network $\phi_{\vec{\theta}}$ is Lipschitz continuous, then clearly $v_{\vec{\theta}}$ is locally Lipschitz continuous, since it is quadratic and thus differentiable with respect to $\phi_{\vec{\theta}}$. To show that $\phi_{\vec{\theta}}$ is Lipschitz continuous, it is sufficient to show that each layer is Lipschitz continuous. This is due to the fact that any function composition $f(g(\vec{x}))$ is Lipschitz continuous with Lipschitz constant $L_f L_g$ if $f$ has Lipschitz constant $L_f$ and $g$ has Lipschitz constant $L_g$. This fact can be seen from $\norm{f(g(\vec{x})) - f(g(\vec{x}'))} \leq L_f \norm{g(\vec{x}) - g(\vec{x}')} \leq L_fL_g\norm{\vec{x} - \vec{x}'}$, for each pair ${\vec{x}, \vec{x}' \in \mathcal{X}}$. By the Lipschitz continuity of function composition and the linearity of $\vec{W}_\ell \vec{y}_{\ell-1}$, each layer transformation $\vec{y}_\ell = \varphi_\ell(\vec{W}_\ell \vec{y}_{\ell-1})$ is Lipschitz continuous if $\varphi_\ell$ is Lipschitz continuous. As a result, the neural network $\phi_{\vec{\theta}}$ is Lipschitz continuous, and the Lyapunov candidate $v_{\vec{\theta}}$ is locally Lipschitz continuous.
\end{proof}

\begin{remark}
    In \cref{eqn:LyapunovMatrix}, we ensured each weight matrix $\vec{W}_\ell$ has a trivial nullspace with the structure
    \begin{equation*}
        \vec{W}_\ell = \begin{bmatrix} \trans{\vec{G}_{\ell 1}}\vec{G}_{\ell 1} + \varepsilon\vec{I}_{d_{\ell-1}} \\ \vec{G}_{\ell 2}\end{bmatrix},
    \end{equation*}
    where $\vec{G}_{\ell 1} \in \R^{q_\ell \x d_{\ell-1}}$ for some $q_\ell \in \mathbb{N}_{\geq 1}$, $\vec{G}_{\ell 2} \in \R^{(d_\ell - d_{\ell-1}) \x d_{\ell-1}}$, $\vec{I}_{d_{\ell-1}} \in \R^{d_{\ell-1} \x d_{\ell-1}}$ is the identity matrix, and $\varepsilon \in \R_{>0}$ is a constant. To minimize the number of free parameters required by our neural network Lyapunov candidate, we choose $q_\ell$ to be the minimum integer such that each entry in $\trans{\vec{G}_{\ell 1}}\vec{G}_{\ell 1} \in \R^{d_{\ell-1} \x d_{\ell-1}}$ is independent from the others. Since $\trans{\vec{G}_{\ell 1}}\vec{G}_{\ell 1}$ is symmetric, it has $\sum_{j=1}^{d_{\ell-1}} j = d_{\ell-1}(d_{\ell-1} + 1)/2$ free parameters, thereby requiring $q_\ell d_{\ell-1} \geq d_{\ell-1}(d_{\ell-1} + 1) / 2$ or $q_\ell \geq (d_{\ell-1} + 1) / 2$. For this, we choose $q_\ell = \left\lceil(d_{\ell-1} + 1)/2\right\rceil$.
\end{remark}

\end{document}